%% file: main.tex
\documentclass[11pt]{article}

\usepackage[margin=1in]{geometry}
\usepackage{amsmath,amssymb,amsthm,mathtools}
\usepackage{enumitem}
\usepackage{booktabs}
\usepackage{float, caption, subcaption}
\usepackage{microtype}
\usepackage{xcolor, xspace}
\usepackage{hyperref}
\usepackage[nameinlink,capitalize,noabbrev]{cleveref}
\usepackage{thmtools}
\usepackage{thm-restate}
\usepackage{algorithm}

\newtheorem{theorem}{Theorem}[section]
\newtheorem{lemma}[theorem]{Lemma}
\newtheorem{proposition}[theorem]{Proposition}
\newtheorem{corollary}[theorem]{Corollary}

\theoremstyle{definition}
\newtheorem{definition}[theorem]{Definition}

\newcommand{\R}{\mathbb{R}}

\newcommand{\arcs}{\text{Arcs}}

\def\final{0}  
\def\iflong{\iffalse}
\ifnum\final=0  
\newcommand{\knote}[1]{{\color{blue}[{Karthik: \bf #1}]\marginpar{\color{blue}*}}}
\newcommand{\todo}[1]{{\color{red}[{TODO: \bf #1}]\marginpar{\color{red}*}}}
\newcommand{\krnote}[1]{{\color{purple}[{Krishna: \bf #1}]\marginpar{\color{purple}*}}}
\else 
\newcommand{\knote}[1]{}
\newcommand{\todo}[1]{}
\newcommand{\krnote}[1]{}
\fi  

\title{Polynomial-time Stable Matching in Network Hypergraphs\thanks{Grainger College of Engineering, University of Illinois, Urbana-Champaign, Email: {\tt\{karthe, kk17\}@illinois.edu}. Supported in part by NSF grant CCF-2402667. }}
\author{Karthekeyan Chandrasekaran \and Krishna Kalathur}
\date{}

\begin{document}
\maketitle

\input{abstract}
\input{intro-1}
\input{stable-matching-and-kernel}

\input{DE-graphs}

\input{network-hypergraphs}
\input{conclusion}

\paragraph{AI Disclosure.} The key connection that drives the algorithm was discovered by chatting with ChatGPT-5.6 Sol Max. The authors verified it independently and have presented the details in their own words. The authors used ChatGPT for editorial polishing and assume full responsibility for all content. 

\paragraph{Acknowledgements.} Karthekeyan thanks Gergely Cs\'{a}ji and Ildik\'{o} Schlotter for preliminary discussions on these topics at the Eml\'ekt\'abla Workshop 2024 \cite{emlektabla-report}. 


\bibliographystyle{abbrv}
\bibliography{references}

\appendix

\end{document}

%% file: abstract.tex
\begin{abstract}
    We show that there exists a polynomial-time algorithm to find a stable matching in network hypergraphic preference systems. 
    The key connection that drives the algorithm was discovered by chatting with ChatGPT-5.6 Sol Max. We verified it independently and present the details in our own words. 
\end{abstract}

%% file: intro-1.tex
\section{Introduction}
Stable matching is a foundational model at the interface of combinatorial optimization and economics. In the classical bipartite setting, the deferred-acceptance algorithm of Gale-Shapley \cite{GS62} proves that a stable matching always exists and can be found in polynomial time. Variants of this algorithm underpin matching mechanisms in markets such as medical residency \cite{Rot84}. 
Bipartite stable matching also has a rich mathematical structure: stable matchings form a distributive lattice, admit compact representations through rotations, several combinatorial algorithms are known, and the polyhedral structure is well-understood \cite{GusfieldIrving1989, Rothblum1992}. This successful combination of applications, algorithms, and structure has made stable matching a central paradigm for matching under preferences. 

The classical bipartite stable matching model is, however, intrinsically pairwise and two-sided. In many settings, the relevant unit of cooperation is a coalition rather than a pair. They naturally lead to a hypergraph where vertices correspond to agents and hyperedges describe coalitions. 
Aharoni and Fleiner \cite{AF03} modeled such settings by a \emph{hypergraphic preference system}. A hypergraphic preference system is a pair $(H, \succ)$, where $H=(V, E)$ is a hypergraph and 
\[
\succ = (\succ_v)_{v\in V}
\]
is a preference profile: for every vertex $v$, the relation $\succ_v$ is a strict ordering of the hyperedges incident to $v$. 
A \emph{matching} is a set of pairwise vertex-disjoint hyperedges. A matching is \emph{stable} if every unchosen hyperedge $e$ contains a vertex that prefers its matched hyperedge over $e$. We assume throughout that $H$ has singleton hyperedges $e_v=\{v\}$ for each $v\in V$ and each $v\in V$ ranks $e_v$ as its least preferred hyperedge, essentially representing the outside option of remaining unmatched. Thus, vertices correspond to agents, hyperedges correspond to feasible coalitions, and a stable matching seeks a collection of feasible coalitions such that no coalition can deviate in a way that benefits all of its members. Stable hypergraph matching is closely connected to core stability in finitely generated non-transferable utility (NTU) games and hedonic coalition games \cite{BanerjeeKonishiSonmez2001, AzizBrandl2012, Woeginger2013, BF16}. 

The move pairwise and two-sided coalitions (i.e., from bipartite scenarios) to arbitrary coalitions changes the existential picture dramatically. 
A stable matching need not exist for an arbitrary hypergraphic preference system, and deciding existence is NP-complete \cite{NgHirschberg1991}. 
Nevertheless, Aharoni and Fleiner \cite{AF03} applied Scarf's lemma to conclude that every such system has a \emph{fractional} stable matching. Moreover, their fractional stable matching may be chosen as an extreme point of the fractional matching polytope 
\[
P(H) = \{x\in \R^E_{\ge 0}: A_H x =1\},
\]
where $A_H$ is the vertex-hyperedge incidence matrix of $H$. In particular, if $P(H)$ is an integral polyhedron, then their results yield an integral stable matching \cite{AF03, BF16}. 
Bir\'{o} and Fleiner \cite{BF16} formalized the interpretation of Scarf solutions as fractional core elements in capacitated NTU games, related those solutions to stable hypergraph matchings, and emphasized an important integrality consequence 
for \emph{normal} hypergraphs: A hypergraph $H$ is \emph{normal} if every sub-hypergraph $H'=(V, E')$ where $E'\subseteq E$ satisfies $\chi'(H')=\Delta(H')$, where $\chi'(H')$ is the chromatic index of $H'$, i.e., minimum number of matchings to cover $E'$ and $\Delta(H')$ is the maximum degree among all vertices in $H'$. 
Lov\'{a}sz \cite{Lovasz1972} showed that a hypergraph $H$ is \emph{normal} if and only if its associated fractional matching polytope $P(H)$ is integral. Consequently, every hypergraphic preference system on a normal hypergraph admits a stable matching. 

The above proof of existence of stable matching for normal hypergraphs does not give a polynomial-time algorithm. Scarf's pivoting procedure is finite, but no polynomial bound is known in general; in fact, computing a fractional stable matching of a hypergraphic preference system is PPAD-complete even under severe degree and rank restrictions \cite{KintaliEtAl2013, IshizukaKamiyama2018, csaji2022complexity}. 
Nevertheless, the above results motivate a central search question: 
\begin{quote}
    \textit{Does there exist a polynomial-time algorithm to find a stable matching in hypergraphic preference systems, where the hypergraph is a normal hypergraph?}
\end{quote}
The normal-hypergraph promise is algorithmically meaningful: Lov\'{a}sz \cite{Lovasz1972} characterized normal hypergraphs via the Helly property and perfection of the line graph, both of which are recognizable in polynomial time \cite{BergeDuchet1975, ChudnovskyEtAl2005}, thereby making normality testable in polynomial time. 

Recent work has answered this question for some structured subfamilies of normal hypergraphs and we discuss these now---see Figure \ref{fig:containment} for containment relations between these families. A \emph{subtree hypergraph} admits a tree on $V$ such that every hyperedge induces a subtree. Subtree hypergraphs are subfamilies of normal hypergraphs because subtrees of a tree have the Helly property and their intersection graphs are chordal \cite{Gavril1974}. Stable matchings on subtree hypergraphs can be found in polynomial time:  Cs\'{a}ji \cite{Csa-egres} obtained a polynomial-time algorithm for subtree hypergraphs by reducing stable matching to a kernel problem on clique-acyclic superorientation of a chordal graph---we will discuss this approach later; Bir\'{o}, Cs\'{a}ji, and Schlotter \cite{BCS25} subsequently gave a direct recursive algorithm. 

A different line of work studies \emph{unimodular hypergraphs}, namely hypergraphs whose vertex-hyperedge incidence matrix is totally unimodular \cite{BCS25}. Unimodular hypergraphs are normal via standard polyhedral results. Within the family of unimodular hypergraphs lies the family of \emph{network hypergraphs}---network hypergraphs are hypergraphs whose vertex-hyperedge incidence matrix is a \emph{network matrix}. Equivalently, its vertices correspond to the arcs of an oriented tree, known as the principal tree, and its hyperedges correspond to directed paths of the principal tree. 
\emph{Arborescence hypergraphs} are a proper subfamily of both network hypergraphs and subtree hypergraphs---we refer the reader to \cite{CFHS25} for the definition of arborescence hypergraphs. 
We will see later that hypergraphic preference systems over network hypergraphs contain bipartite stable matching as a special case (see Section \ref{sec:network-hypergraphs}). In contrast, bipartite stable matching cannot be modeled as a special case of stable matching over subtree hypergraphic preference systems. 
A suitable implementation of Scarf's algorithm is known to converge in polynomial time for arborescence hypergraphs \cite{CFHS25} as well as bipartite stable matching \cite{FaenzaHeSethuraman2025}, but those analyses do not seem to extend to network hypergraphs. Given this status, recent works identified network hypergraphs as the next natural frontier towards polynomial-time algorithms for stable matchings in hypergraphic preference systems over normal hypergraphs. Again, the network-hypergraph promise is algorithmically meaningful: there exists a polynomial-time algorithm to verify whether a given hypergraph is a network hypergraph (e.g., see \cite{schrijver-book-1}). In this work, we resolve the network hypergraph frontier by showing the existence of a polynomial-time algorithm. 

\begin{theorem}\label{thm:network-hypergraph-stable-matching-poly-time}
    There exists a polynomial-time algorithm to compute a stable matching of a given hypergraphic preference system, where the hypergraph is a network hypergraph.
\end{theorem}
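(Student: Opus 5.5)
The plan is to reduce stable matching in a network hypergraphic preference system to finding a kernel in a suitably oriented graph, following the route Cs\'{a}ji used for subtree hypergraphs. The file structure suggests the body contains a section relating stable matchings to kernels and a section on ``DE-graphs''. Concretely, let $T$ be the principal tree, so vertices of $H$ are arcs of $T$ and hyperedges are directed paths in $T$. Form the line graph $G$ of $H$: one node per hyperedge, with two nodes adjacent when the hyperedges share a vertex. Matchings of $H$ are exactly the independent sets of $G$.

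Next I orient $G$ using the preferences. For each arc $v$ of $T$, the hyperedges containing $v$ form a clique $K_v$ of $G$. On $K_v$ I orient each edge from the less preferred to the more preferred hyperedge under $\succ_v$; an edge lying in several such cliques receives all the corresponding orientations. A standard argument then shows that a stable matching is exactly a kernel of this superorientation: an independent set $M$ such that every other node has an out-neighbour in $M$ (the singleton edges $e_v$ guarantee that the kernel behaves like a matching that covers every vertex). The existence of kernels would follow from Boros--Gurvich style results, provided $G$ is perfect and the orientation is clique-acyclic. By Lov\'{a}sz's normality theorem and the total unimodularity of network matrices, $G$ is perfect, and its maximal cliques are exactly the sets $K_v$ by the Helly property of directed paths in a tree. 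Acyclicity within each $K_v$ holds because $\succ_v$ is a linear order.

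The algorithmic step is the main obstacle. Kernels in general clique-acyclic superorientations of perfect graphs are not known to be computable in polynomial time. So I would identify a special structure, which I guess is what ``DE-graphs'' means: graphs whose nodes are \emph{directed edges}, or directed paths, arranged so that the graph has a two-sided, bipartite-like structure. That structure generalizes both bipartite stable matching and the chordal case.

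My concrete attempt is as follows. Root each path at its tail and head, and treat the constraints at arcs of $T$ as a ``consistent'' family of choice functions. Then I would run a deferred-acceptance or Fleiner-style fixed-point iteration. Each tree arc acts as an agent with a linear order, and paths propose along $T$. A monotone (Tarski) iteration on the lattice of ``currently allowed'' path sets, where rejected paths are permanently removed, terminates after at most $|E|$ removals. The crux is to prove that the fixed point is a matching; I would attempt this using the network-matrix laminarity of paths in $T$ to show that conflicts are always resolved by a single arc.

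If that fails, the fallback is to adapt Cs\'{a}ji's recursive algorithm. Contract a leaf arc of $T$ and resolve the hyperedges ending there via a bipartite stable-matching subproblem, then recurse, using total unimodularity to guarantee that the resolution extends. Either way, correctness reduces to the kernel--stable-matching equivalence, and polynomiality comes from each step permanently deleting a hyperedge.
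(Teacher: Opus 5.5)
Your structural half agrees with the paper. The orientation you build is exactly the conflict preference graph $D_{H,\succ}$, and its kernels are the stable matchings. The arc-Helly property of directed paths in an oriented tree (Monma--Wei) places every clique of $G$ inside some $K_v$, though not every $K_v$ need be a maximal clique. So linearity of $\succ_v$ gives clique-acyclicity.

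The gap is the algorithmic half, which is the whole content of the theorem beyond the Bir\'{o}--Fleiner existence argument: you never actually obtain a polynomial-time kernel computation.
\begin{itemize}
\item Your Tarski/deferred-acceptance iteration leaves its crux unproved. As described, every tree arc acts as a receiver that permanently discards each allowed path it does not rank first, and this is not sound: a path can be discarded because of a competitor that is later discarded itself. Take the bipartite special case with $m_1\colon w_1\succ w_2$, $m_2\colon w_2\succ w_1$, $w_1\colon m_2\succ m_1$, $w_2\colon m_1\succ m_2$. One simultaneous round rejects every hyperedge, singletons included, and the resulting empty set is not a kernel.
\item Deferred acceptance relies on a proposer/receiver split, and Fleiner's fixed-point method relies on two choice functions. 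You have not exhibited such a two-sided structure for paths running through several tree arcs.
\item The fallback recursion is only a sketch. Total unimodularity guarantees that integral solutions exist, not that a stable resolution of a leaf subproblem extends to a global stable matching.
\end{itemize}

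The missing idea is that no new algorithm is needed. A DE graph is precisely the arc-intersection graph of directed paths in an oriented tree. Represent each hyperedge $e$ by its directed path $P_e$ in the principal tree, which is computable in polynomial time from $H$; then your graph $G$ is itself a DE graph, and no bipartite-like structure is involved. Pass-Lanneau, Igarashi, and Meunier proved that a kernel of any clique-acyclic super-orientation of a DE graph can be computed in polynomial time. Feeding your clique-acyclicity argument into that theorem, and reading the kernel back as a matching via the kernel/stable-matching correspondence, completes the proof. This is exactly the paper's argument.
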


The theorem can be viewed as an algorithmic strengthening of Bir\'{o} and Fleiner's \cite{BF16} existential argument. We emphasize that the result concerns with finding a stable matching; optimizing a weight function over stable matchings is NP-hard even for laminar hypergraphs which is a subfamily of network hypergraphs \cite{BCS25}. We refer the reader to Figure \ref{fig:containment} for the containment relation between the above-mentioned families of hypergraphs that are known to have a stable matching and the algorithmic status of computing a stable matching in each of them.

\begin{figure}[H]
\centering
\includegraphics[width=0.75\linewidth]{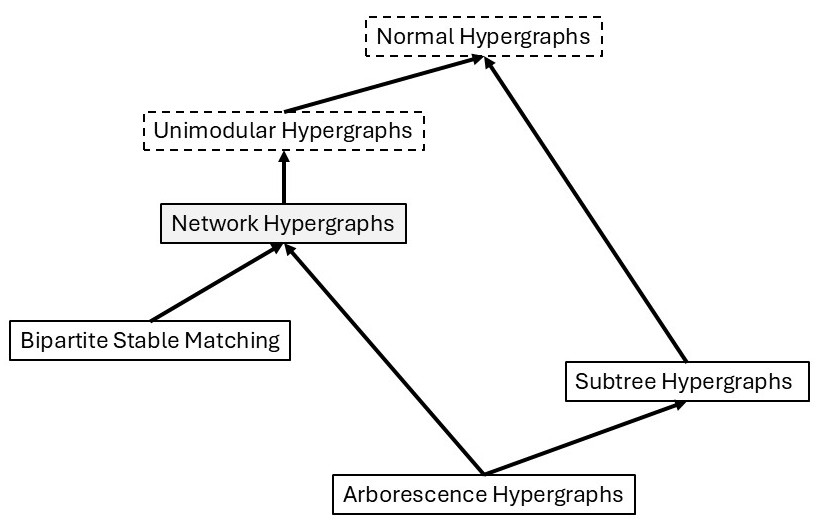}
\caption{Relevant hypergraph families for which the existence of stable matching is guaranteed and containment relations between them. An arrow from $A$ to $B$ indicates that $A$ is subfamily of $B$. The algorithmic status of hypergraph families within dashed boxes is still open while those within continuous boxes are known to be polynomial time. Our main contribution is resolving the algorithmic status of network hypergraphs. Network and subtree hypergraphs are incomparable. Bipartite stable matching is a special case of hypergraphic preference system over network hypergraphs but not subtree hypergraphs.}
\label{fig:containment}
\end{figure}

\paragraph{Proof outline.}
Our algorithm is based on a reduction to finding kernels in digraphs. For a hypergraphic preference system $(H, \succ)$, construct its conflict graph $G_H$ on node set $E(H)$ with a pair of hyperedge-nodes being adjacent if the hyperedges have non-empty intersection. Now, \emph{superorient} every edge of $G_H$ as follows: add an arc $e\rightarrow f$ if some vertex in $e\cap f$ prefers $f$ to $e$; both directions may be present when different common vertices have opposite preferences. Denote the resulting digraph by $D_{H, \succ}$, called the conflict preference graph. A \emph{kernel} $K$ in a digraph is a subset of vertices that forms a stable-set (i.e., no arc between any pair in $K$) and is absorbing (i.e., every vertex $u\not\in K$ has a vertex $v\in K$ such that $u\rightarrow v$ is present). Cs\'{a}ji \cite{Csa-egres} observed that stable matchings of $(H, \succ)$ are exactly the kernels of $D_{H, \succ}$. Essentially, the stable-set condition guarantees that the corresponding hyperedges form a matching, while absorption guarantees that every unchosen hyperedge points to a chosen one that defeats it at a common vertex thereby leading to stability. 

By the above reduction, it suffices to find the kernel of the digraph $D_{H, \succ}$ under the promise that $H$ is a network hypergraph. For this, we rely on the network representation of $H$. 
Each hyperedge of $H$ is a directed path in an oriented tree, and two hyperedges conflict exactly when their paths share a tree arc. Hence, $G_H$ is a \emph{directed-edge graph} in the terminology of Monma and Wei \cite{MW86}---abbreviated as \emph{DE graph}---i.e., an arc-intersection graph of directed paths in an oriented tree. Monma and Wei showed that directed paths in an oriented tree satisfy an arc-Helly property: every pairwise arc-intersecting family shares a common arc. From this, we infer that $D_{H, \succ}$ is a clique-acyclic super-orientation of the DE graph $G_H$, i.e., no clique of $G_H$ contains a unidirectional cycle in $D_{H, \succ}$. Pass-Lanneau, Igarashi, and Meunier \cite{PIM20} gave a polynomial-time algorithm for computing a kernel in a clique-acyclic super-orientation of a DE graph. Applying that algorithm and translating the kernel back to hyperedges gives the desired polynomial-time algorithm, thereby proving Theorem \ref{thm:network-hypergraph-stable-matching-poly-time}. 

Our argument parallels the earlier proof of polynomial-time stable matching in hypergraphic preference systems over subtree hypergraphs that appeared in the EGRES report of Cs\'{a}ji \cite{Csa-egres}. If $H$ is a subtree hypergraph, then the conflict graph $G_H$ is chordal and the Helly property of subtrees makes the conflict preference graph clique-acyclic; Pass-Lanneau, Igarashi, and Meunier's \cite{PIM20} polynomial-time kernel computation for clique-acyclic super-orientations of ``chordal'' graphs completes the proof for subtree hypergraphs. For network hypergraphs, the conflict graph $G_H$ need not be chordal (for an example, see the hypergraphic preference system in Figure \ref{fig:preference-system} which we will later see is a network hypergraph and its conflict graph $G_H$ in Figure \ref{fig:conflict-graph} which has a chordless cycle namely $C=\{f_1, f_2, f_3, f_4\}$), so that particular result of \cite{PIM20} cannot be invoked. The new ingredients are that the conflict graph is a DE graph, that directed paths satisfy the stronger arc-Helly property needed to certify clique-acyclicity, and that kernels are polynomial-time computable for clique-acyclic superorientations of DE graphs. We develop the kernel correspondence in Section \ref{sec:stable-matching-kernel}, the required facts about DE graphs in Section \ref{sec:tractable-kernels}, and exploit the tree representation of network hypergraphs to complete the proof of the main theorem in Section \ref{sec:network-hypergraphs}. Although the proof turned out to be simple via a sequence of known connections, its simplicity is apparent only in hindsight. We hope that the community will benefit from our presentation of these connections. 

%% file: stable-matching-and-kernel.tex
\section{Stable Hypergraph Matching and Digraph Kernel}\label{sec:stable-matching-kernel}
In this section, we define hypergraphic preference systems and stable matching of a hypergraphic preference system. Next, we show that there is a 1-to-1 correspondence between stable matchings of a hypergraphic preference system and kernels of a digraph derived from the hypergraphic preference system. This connection reduces the problem of finding a stable matching of a hypergraphic preference system to finding a kernel of a digraph derived from the hypergraphic preference system. The connection and the reduction were already observed by Cs\'{a}ji in his EGRES report \cite{Csa-egres}. We go through them to establish notation and terminology, and for the sake of completeness. 

\subsection{Hypergraphic Preference System}
We begin with a formal definition of a hypergraphic preference system. See Figure \ref{fig:preference-system} for an example of a hypergraphic preference system. 

\begin{definition}[Hypergraphic preference system]
    A hypergraph  $H=(V,E)$ is defined by a vertex set $V$ and a hyperedge set $E$, where every hyperedge $e\in E$ is a subset of $V$.
A \emph{hypergraphic preference system} is given by a pair $(H, \succ)$, where $H=(V,E)$ is a hypergraph and $\succ:=\{\succ_i:i\in V\}$ is the preference profile, $\succ_i$ being a strict order over $\delta(i)=\{e\in E:i\in e \}$ for each $i\in V$. For $e, e' \in \delta(i)$, we write $e\succeq_i e'$ if either $e\succ_i e'$ or $e=e'$. In addition, we assume that for every $i\in V$, the \emph{singleton hyperedge} $e_i=\{i\}\in\delta(i)$ and for every $e'\in\delta(i)$, $e'\succeq_i e_i$.\footnote{The assumption corresponds, in the bipartite setting, to the usual hypothesis that an agent prefers to be matched rather than being unmatched.}. 
\end{definition}

\begin{figure}
    \centering
    \includegraphics[width=0.8\linewidth]{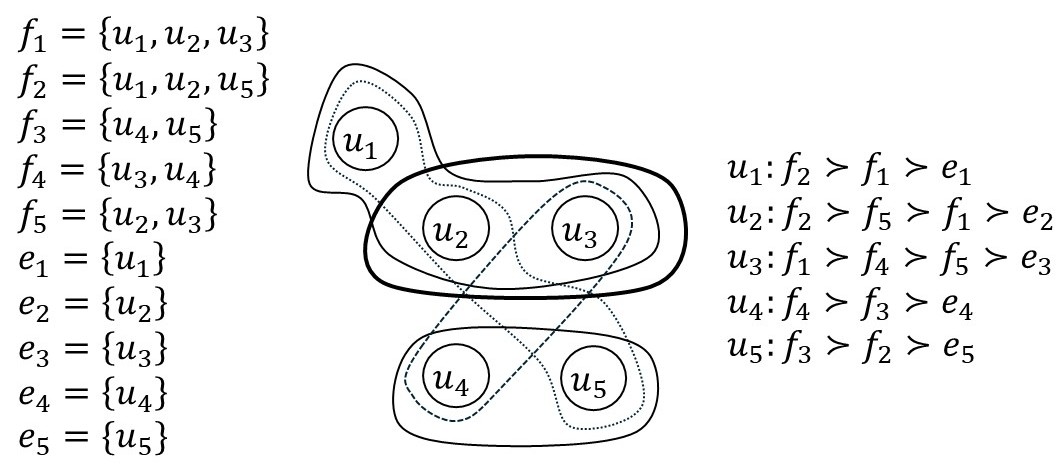}
    \caption{An example of a hypergraphic preference system. The hypergraph has vertex set $\{u_1, u_2, \ldots, u_5\}$, hyperedges $f_1, f_2, \ldots, f_5$, and singleton hyperedges $e_1, e_2, \ldots, e_5$. The preference profile of each vertex is shown on the right.}
    \label{fig:preference-system}
\end{figure}

Next, we define the notion of stable matching of a hypergraphic preference system. 

\begin{definition}[Stable matching]\label{def:sm}
    A \emph{stable matching} of a hypergraphic preference system $(H=(V, E), \succ)$ is a vector $x\in\{0,1\}^E$ so that for every hyperedge $e\in E$, there exists a vertex $v\in e$ such that
\begin{equation}\label{eq:sm}
    \sum_{e'\in\delta(v),e'\succeq_v e}x_{e'}=1.
\end{equation}
Equation~\eqref{eq:sm} for the self-loop hyperedge $e=e_v$ imposes that $x$ is the characteristic vector of a matching. Equation~\eqref{eq:sm} for an arbitrary hyperedge $e$ guarantees that there exists a vertex $v \in e$ and a hyperedge $e'$ in the matching (i.e.,~$x_{e'}=1$) so that $e'\succeq_v e$. 
\end{definition}

We encourage interested readers to find a stable matching for the hypergraphic preference system given in Figure \ref{fig:preference-system}---we will identify it in the next subsection. 
A vector $x\in [0, 1]^E$ satisfying Equation~\eqref{eq:sm} is said to be a \emph{fractional stable matching}. 
Aharoni and Fleiner \cite{AF03} showed that every hypergraphic preference system admits a fractional stable matching. However, their results is non-constructive. An interesting consequence of their result is that if the vertex-hyperedge incidence matrix of the hypergraph is totally unimodular, then every hypergraphic preference system associated with such a hypergraph has a stable matching (i.e., an integral stable matching). 

\subsection{Connection to Kernels of Digraphs}
In this section, we define kernels of a digraph and formalize a known encoding of stable matchings of a hypergraphic preference system to kernels of a digraph derived from the hypergraphic preference system. We begin with the definition of a kernel of a digraph. 
\begin{definition}
    Let $D$ be a digraph. 
    \begin{enumerate}
        \item A subset $K\subseteq V(D)$ is \emph{stable} if no two vertices $u, v\in K$ are adjacent in the underlying undirected graph of $D$. 
        \item A subset $K\subseteq V(D)$ is \emph{absorbing} if for every $u\in V(D)\setminus K$, there exists a vertex $k\in K$ such that $(u,k)\in \arcs(D)$. 
        \item A subset $K\subseteq V(D)$ is a \emph{kernel} if it is absorbing and stable.
    \end{enumerate}
\end{definition}
See Figure \ref{fig:kernel} for a digraph and its kernel. 
We emphasize that arbitrary digraphs may not have a kernel. Nevertheless, Cs\'{a}ji identified a nice encoding of stable matchings of a hypergraphic preference system and kernels of a digraph constructed from the hypergraphic preference system. We now state the construction of the digraph from the hypergraphic preference system and will subsequently, formalize the encoding. 

\begin{definition}
    Let $(H=(V, E), \succ)$ be a hypergraphic preference system. 
    \begin{enumerate}
        \item The \emph{conflict graph} $G_H$ of $H$ is an undirected simple graph on vertex set $E$ with two distinct vertices $e, f\in E$ being adjacent if $e\cap f\neq \emptyset$. 
        \item The \emph{conflict preference graph} $D_{H, \succ}$ is a digraph on vertex set $E$ with 
        \[\arcs\left(D_{H, \succ}\right):=\left\{(e,f): \exists v\in e\cap f\text{ such that } f\succ_v e\right\}.\]
    \end{enumerate}
\end{definition}

In the conflict preference graph, an arc is directed from a hyperedge $e$ to a conflicting hyperedge $f$ if $e$ is defeated by $f$ at some common vertex. Equivalently, for hyperedges $e, f\in E$ such that $e\cap f\neq \emptyset$, if all vertices in $e\cap f$ prefer $f$ to $e$ then only an arc from $e$ to $f$ is present; if all vertices in $e\cap f$ prefer $e$ to $f$ then only an arc from $f$ to $e$ is present; or if their rankings disagree on some pair of vertices in $e\cap f$, then arcs in both directions are present. See Figures \ref{fig:conflict-graph} and \ref{fig:conflict-preference-graph} for the conflict graph $G_H$ and the conflict preference graph $D_{H, \succ}$ of the hypergraphic preference system $(H, \succ)$ given in Figure \ref{fig:preference-system}. 

\begin{figure}[h]
    \centering
    \begin{subfigure}{.5\textwidth}
  \centering
  \includegraphics[width=.9\linewidth]{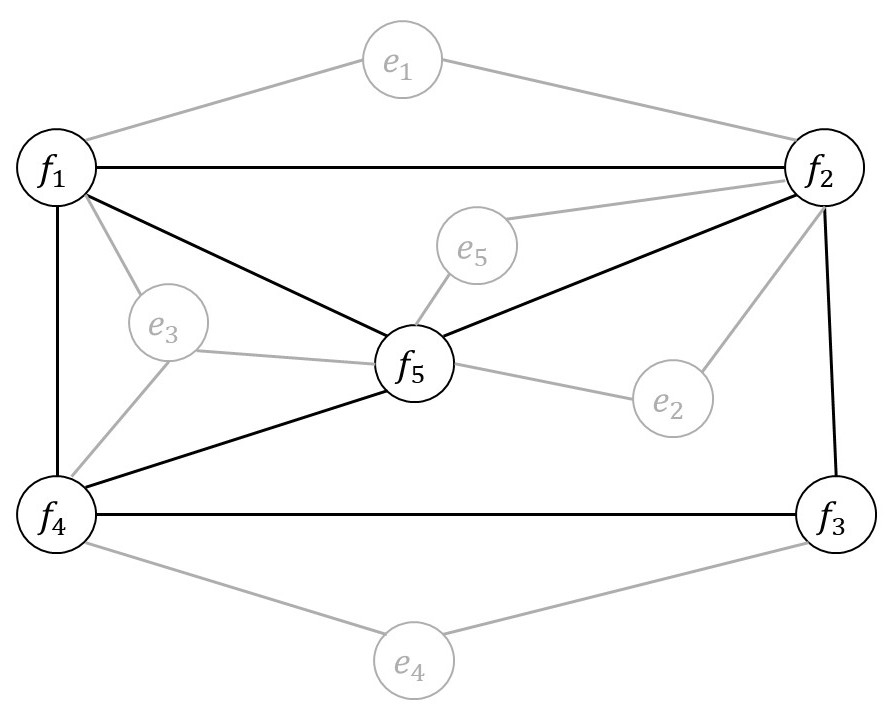}
  \caption{The conflict graph $G_H$. }
  \label{fig:conflict-graph}
\end{subfigure}%
    \begin{subfigure}{.5\textwidth}
  \centering
  \includegraphics[width=.9\linewidth]{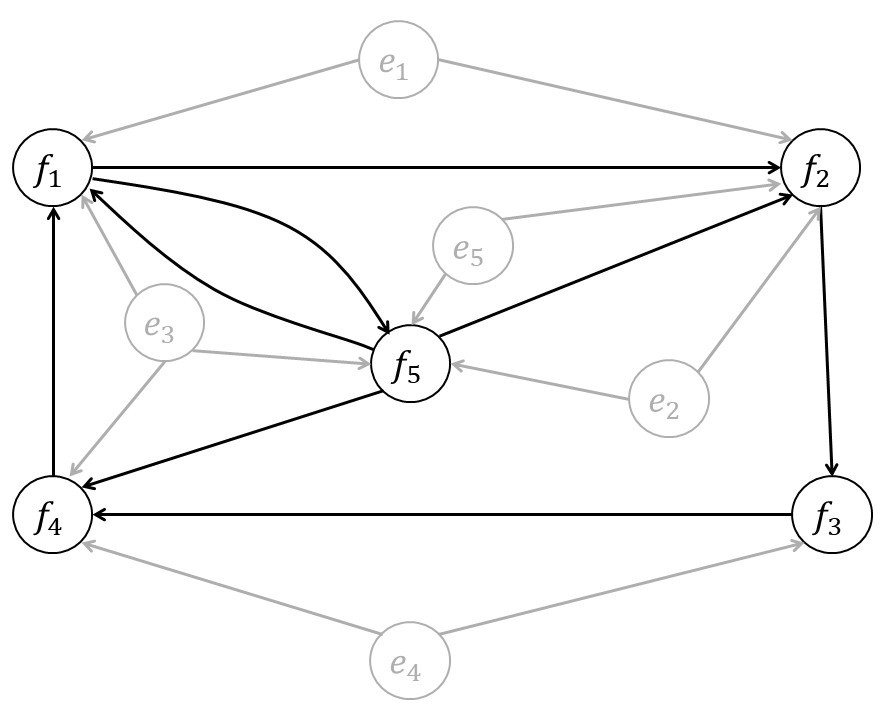}
  \caption{The conflict preference graph $D_{H,\succ}$.}
  \label{fig:conflict-preference-graph}
\end{subfigure}
\caption{The conflict graph $G_H$ and the conflict preference graph $D_{H, \succ}$ of the hypergraphic preference system from Figure \ref{fig:preference-system}. Singleton hyperedges are shown in gray. $K=\{f_2,f_4\}$ is a kernel of $D_{H, \succ}$.}
\label{fig:kernel}
\end{figure}

Cs\'{a}ji showed the following $1$-to-$1$ correspondence between stable matchings of a hypergraphic preference system and kernels of the conflict preference graph associated with the system. We include a proof for completeness. 
\begin{lemma}\label{lem:kernels-and-stable-matchings}\cite{Csa-egres}
    Let $(H=(V, E), \succ)$ be a hypergraphic preference system and $K\subseteq E$. The vector $\chi^{K}$ is a stable matching of $(H, \succ)$ if and only if $K$ is a kernel of $D_{H, \succ}$. 
\end{lemma}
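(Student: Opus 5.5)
The plan is to prove the two directions separately. Both rest on two reformulations of condition \eqref{eq:sm} for $x=\chi^K$, where $K\subseteq E$ is arbitrary.

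\textbf{Matching condition.} Apply \eqref{eq:sm} to a singleton hyperedge $e_v$. The only vertex in $e_v$ is $v$, and every $e'\in\delta(v)$ satisfies $e'\succeq_v e_v$. So \eqref{eq:sm} for $e_v$ reads $|K\cap\delta(v)|=1$. Conversely, for any $e$ and $v\in e$, we have $\sum_{e'\in\delta(v),e'\succeq_v e}x_{e'}\le |K\cap\delta(v)|$.

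\textbf{Stability reformulation.} Once each vertex lies in exactly one hyperedge of $K$, condition \eqref{eq:sm} for $e$ says the following: some $v\in e$ has its unique $K$-hyperedge $e'\ni v$ satisfying $e'\succeq_v e$.

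\textbf{Stable matching $\Rightarrow$ kernel.} Suppose $\chi^K$ is a stable matching. Then $|K\cap\delta(v)|=1$ for every $v$, so the hyperedges of $K$ are pairwise disjoint. Hence $K$ is stable in $G_H$, and therefore in $D_{H,\succ}$, since arcs only join intersecting hyperedges. For absorption, take $e\notin K$. Condition \eqref{eq:sm} gives $v\in e$ and $e'\in K$ with $e'\succeq_v e$. Since $e'\ne e$, this is $e'\succ_v e$, so $v\in e\cap e'$ and $(e,e')$ is an arc of $D_{H,\succ}$.

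\textbf{Kernel $\Rightarrow$ stable matching.} Suppose $K$ is a kernel. Stability of $K$ means the underlying undirected graph has no edge inside $K$. This is the step needing care: I must check that the underlying graph of $D_{H,\succ}$ equals $G_H$. If $e\cap f\ni v$ with $e\neq f$, then $\succ_v$ is a strict order, so either $f\succ_v e$ or $e\succ_v f$, and at least one arc exists. Thus $K$ consists of pairwise disjoint hyperedges, giving $|K\cap\delta(v)|\le 1$.

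For the reverse inequality, fix $v$ and suppose no hyperedge of $K$ contains $v$. Then $e_v\notin K$, so by absorption there is an arc $(e_v,f)$ with $f\in K$. Such an arc needs a vertex of $e_v\cap f$, which must be $v$, contradicting the assumption. Hence $|K\cap\delta(v)|=1$.

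It remains to check \eqref{eq:sm}. For $e\in K$, take any $v\in e$: the sum counts $e$ itself and equals $1$. For $e\notin K$, absorption gives $f\in K$ and $v\in e\cap f$ with $f\succ_v e$, so the sum at $v$ is at least $1$, hence exactly $1$.

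The only mildly delicate points are the uses of the singleton hyperedges, namely coverage of every vertex and the first reformulation, and the observation that stability in $D_{H,\succ}$ coincides with stability in $G_H$, which uses the strictness of the preference orders. Everything else is direct unpacking of the definitions.
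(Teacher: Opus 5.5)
Your proof is correct and takes essentially the same route as the paper: both directions unpack the definitions directly, using absorption to produce the defeating hyperedge, strictness of $\succ_v$ to turn a shared vertex into an arc, and stability of $K$ to make the relevant sum exactly $1$. The only difference is organizational. You first extract $|K\cap\delta(v)|=1$ from the singleton hyperedges $e_v$ and use it globally, which also makes explicit a step the paper leaves implicit in its ($\Rightarrow$) direction: there, a vertex $i$ with sum at least $2$ contradicts \eqref{eq:sm} only through the singleton $e_i$.
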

\begin{proof}
    $(\Leftarrow)$ Suppose that $K\subseteq E$ is a kernel of $D_{H,\succ}$. We show that $\chi^{K} \in \left\{0,1\right\}^E$ is a stable matching of $(H,\succ)$. Let $e\in E$. We need to show that there exists a vertex $v\in e$ such that $\sum_{e'\in \delta(v), e'\succ_v e}\chi^K_{e'}= 1$. We case on whether $e\in K$ or $e\in E\setminus K$. 
    
    Case (i). Suppose $e \in E \setminus K$. By definition of kernel, there exists a hyperedge $f \in K$ s.t. $(e,f) \in \text{Arcs}(D_{H,\succ})$. This implies that there exists a vertex $v \in e \cap f$ s.t. $f \succ_v e$. Thus, $\sum_{e' \in \delta(v),e' \succeq_v e}\chi^{K}_{e'} \geq 1$. Next, we show equality for this vertex $v$.  
    Suppose there exists a hyperedge $g \in \delta(v) \cap K$ s.t., $g \succeq_v e$, $g \neq f$. Then since $\succ_v$ is a total ordering of the hyperedges in $\delta(v)$, either $g \succ_v f$ or $f \succ_v g$. Consequently, either $(g,f)$ or $(f,g)$ is an arc of $D_{H,\succ}$, which contradicts the fact that $K$ is stable. Thus, $\sum_{e' \in \delta(v),e' \succeq_v e}\chi^{K}_{e'} = 1$. 
    
    Case (ii). Suppose $e \in K$. Let $v \in V$ be an arbitrary vertex in $e$. There are no edges $e' \in K$ s.t. $e' \succ_v e$ since otherwise $e$ and $e'$ would be adjacent in $D_{H,\succ}$, which would contradict the stability of $K$. Thus, $\sum_{e' \in \delta(i),e' \succeq_v e}\chi^{K}_{e'} = 1$.
    
    $(\Rightarrow)$ Assume $\chi^{K}$ is a stable matching for $(H,\succ)$. Fix a hyperedge $e \in E \setminus K$. Since $\chi^{K}$ is a stable matching, there a hyperedge $f \in K$ s.t. $f \succ_i e$ for some $i\in e\cap f$. Thus, $(e,f) \in \text{Arcs}(D_{H,\succ})$ which implies that $K$ is absorbing. If $K$ was not stable, then there exist hyperedges $e_1,e_2 \in K$ s.t. $e_1$ and $e_2$ are adjacent in $D_{H,\succ}$. But this implies that there exists a vertex $i \in V$ s.t. $\sum_{e' \in \delta(i),e' \succeq_i e}\chi^{K}_{e'} \geq 2$ for some $e\in \{e_1, e_2\}$, which contradicts the fact that $\chi^{K}$ is a stable matching.
\end{proof}
By Lemma \ref{lem:kernels-and-stable-matchings}, $\{f_2, f_4\}$ is a stable matching for the hypergraphic preference system given in Figure \ref{fig:preference-system}.

%% file: DE-graphs.tex
\section{Tractable Digraph Families for Kernel Computation}\label{sec:tractable-kernels}

Pass-Lanneau, Igarashi, and Meunier \cite{PIM20} identified certain families of digraphs for which kernel computation is polynomial-time tractable. 
For the purposes of stable matching in network hypergraphs, we rely on kernel computability in a clique-acyclic super-orientation of a DE graph. We define these concepts and state their result next. 

A directed graph ${T}$ is a \emph{oriented tree} if the underlying undirected graph (obtained by dropping the orientation of all arcs) is connected and acyclic. For a digraph $D$, a \emph{$D$-directed path} is a directed path in $D$. 
\begin{definition}
    A \emph{directed-edge graph}, abbreviated \emph{DE graph}, 
    is an undirected graph $G=(V, E)$ such that there exists an oriented tree $T$ with the property that every vertex $v\in V$ corresponds to a $T$-directed path and two vertices are adjacent if the corresponding paths share an arc. 
\end{definition}
Equivalently, a DE graph is the path-arc intersection graph in an oriented tree $T$. 
We emphasize that DE graphs are simple undirected graphs. DE graphs were introduced and characterized by Monma and Wei \cite{MW86}. Their characterization implies a polynomial-time algorithm to verify whether a given graph is a DE graph and if so, then compute a representation, i.e., an oriented tree $T$ and a collection $\mathcal{P}$ of directed paths in $T$ corresponding to the vertices of $G$. 

Next, we define the notion of super-orientation of a simple undirected graph. 

\begin{definition}
A \emph{super-orientation} of a simple undirected graph $G$ is a digraph obtained by replacing every edge $\{u, v\}$ of $G$ by at least one of the arcs $(u, v)$ and $(v, u)$ (both arcs are allowed).
\end{definition}

We will be interested in special kinds of super-orientations of a simple undirected graph, namely clique-acyclic super-orientations. 

\begin{definition}
A super-orientation $D$ of a simple undirected graph $G$ is \emph{clique-acyclic} if no clique of $G$ has a directed cycle in $D$ consisting of arcs oriented in only one direction (i.e., cliques of $G$ do not contain unidirectional cycles in $D$). 
\end{definition}

Clique-acyclic super-orientation of a simple undirected graph can equivalently be characterized via the presence of sinks in every clique. For a digraph $D$ and a subset $U\subseteq V(D)$, a vertex $s$ is a \emph{$U$-sink in $D$} if $(u,s)\in \arcs(D)$ for every $u\in U\setminus s$. 

\begin{proposition}\label{prop:clique-acyclic-characterization}\cite{PIM20}
    Let $G$ be an undirected graph and $D$ be a super-orientation of $G$. 
    $D$ is clique-acyclic if and only if every clique $Q$ of $G$ has a $V(Q)$-sink in $D$. 
\end{proposition}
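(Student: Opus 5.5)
Both directions reduce to elementary facts about finite tournaments. One direction uses the fact that a sink cannot lie on a unidirectional cycle. The other uses the fact that an acyclic complete digraph has a vertex of out-degree zero. Throughout, call an arc $(u,v)$ of $D$ \emph{unidirectional} if $(v,u)\notin\arcs(D)$. A directed cycle "oriented in only one direction" is then a cycle all of whose arcs are unidirectional.

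($\Leftarrow$) Suppose every clique $Q$ of $G$ has a $V(Q)$-sink, and suppose for contradiction that some clique $Q$ contains a directed cycle $C=(v_1,\dots,v_k,v_1)$ of unidirectional arcs, with $k\ge 3$ since a 2-cycle consists of bidirected arcs. The vertex set of $C$ is a subset of $V(Q)$, so it induces a clique $Q'$ of $G$. By hypothesis $Q'$ has a $V(Q')$-sink $s$, and $s=v_i$ for some $i$. The successor $v_{i+1}$ of $s$ on $C$ lies in $V(Q')\setminus\{s\}$, so $(v_{i+1},s)\in\arcs(D)$ by the sink property. But $(v_i,v_{i+1})$ is unidirectional, so $(v_{i+1},v_i)\notin\arcs(D)$, a contradiction.

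($\Rightarrow$) Suppose $D$ is clique-acyclic, and let $Q$ be a clique. Form the auxiliary digraph $R$ on $V(Q)$ whose arcs are exactly the unidirectional arcs of $D$ inside $Q$. Any directed cycle in $R$ is a unidirectional cycle of $D$ inside the clique $Q$, so $R$ is acyclic. A finite acyclic digraph has a vertex $s$ with out-degree $0$ in $R$. I claim $s$ is a $V(Q)$-sink. Take any $u\in V(Q)\setminus\{s\}$. Since $Q$ is a clique, $\{u,s\}$ is an edge of $G$, so $D$ has at least one of $(u,s)$ and $(s,u)$. If $(u,s)\notin\arcs(D)$, then $(s,u)$ is present and unidirectional, hence an out-arc of $s$ in $R$, which is impossible. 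So $(u,s)\in\arcs(D)$, as required.

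The main point needing care is the treatment of bidirected pairs. The definition of clique-acyclicity ignores cycles that use bidirected arcs, so acyclicity must be applied only to the unidirectional subdigraph $R$. I also need to check that "no out-arc in $R$" really forces an in-arc in $D$; this is exactly where the super-orientation property (every edge of $G$ gets at least one arc) is used. Apart from this, the argument is routine, and the empty clique and single-vertex cases hold vacuously.
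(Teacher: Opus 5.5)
Your proof is correct in both directions. The paper gives no proof of its own and only cites \cite{PIM20}, so there is no in-paper argument to compare against. Your auxiliary digraph $R$ is exactly $D'[Q]$, the one-way part of $D$ that the paper uses in its remark after the proposition, so your ($\Rightarrow$) direction also justifies that remark.

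Two small comments:
\begin{itemize}
\item In ($\Leftarrow$), passing from $Q$ to the sub-clique $Q'$ spanned by the cycle is essential, not just convenient. A sink of $Q$ itself may lie off the cycle: take a one-way triangle plus a fourth vertex receiving one-way arcs from all three. The same example shows why the right-hand side must range over all cliques and not only maximal ones.
\item You correctly read a $V(Q)$-sink as a vertex of $Q$, and your ($\Leftarrow$) argument depends on that reading. Under it, the empty clique does not ``hold vacuously'': it has no sink at all, and your ($\Rightarrow$) step needs $V(Q)\neq\emptyset$ to get a vertex of out-degree $0$. Simply take cliques to be nonempty, as the paper implicitly does.
\end{itemize}
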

Verifying clique-acyclicity of a given super-orientation $D$ of an arbitrary undirected graph $G$ is co-NP-complete (see Corollary 11 in \cite{AH15}). Clique-acyclicity of a given super-orientation of a DE graph can be verified in polynomial-time: It is easy to see that $D$ is clique-acyclic if and only if $D'[Q]$ is acyclic for every maximal clique $Q$ of $G$, where $D':=(V(G), \{(u,v): (u,v)\in \arcs(D), (v,u)\not\in \arcs(D)\}$; Monma and Wei \cite{MW86} showed that the number of maximal cliques in a DE graph is $O(n)$ and they can all be enumerated in polynomial time.

Pass-Lanneau, Igarashi, and Meunier \cite{PIM20} showed that there exists a polynomial-time algorithm to compute a kernel of a given clique-acyclic super-orientation of a DE graph. 

\begin{theorem}\cite{PIM20}\label{thm:kernel-of-de-graphs}
    Given a digraph $D$ such that $D$ is a clique-acyclic super-orientation of a DE graph $G$, 
    there exists a polynomial-time algorithm to compute a kernel of $D$. 
\end{theorem}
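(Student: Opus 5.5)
The plan is to reduce the kernel problem on $D$ to a combinatorial problem on the tree representation, and then solve that problem with a deferred-acceptance (fixed-point) procedure. This is a plan rather than a checked proof: the monotone-operator step below is the part I am least sure of.

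\textbf{Step 1: representation and maximal cliques.} Using Monma and Wei's characterization, compute in polynomial time an oriented tree $T$ and directed paths $\{P_v\}_{v\in V(G)}$ representing $G$. By the arc-Helly property of directed paths in an oriented tree, every clique $Q$ of $G$ is contained in a set of the form
\[ Q_a:=\{v: a\in P_v\}, \qquad a\in \arcs(T). \]
Hence the maximal cliques are among the $O(|V(T)|)$ sets $Q_a$.

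With this, a set $K$ is stable exactly when $|K\cap Q_a|\le 1$ for every arc $a$. It is absorbing exactly when every $u\notin K$ has an arc $a\in P_u$ and a vertex $k\in K\cap Q_a$ with $(u,k)\in\arcs(D)$. So a kernel is a ``stable matching'' in which the tree arcs play the role of agents and the paths play the role of coalitions. The preferences are only partial: by Proposition~\ref{prop:clique-acyclic-characterization}, every subset $S\subseteq Q_a$ has an $S$-sink. This defines a choice function $C_a(S)$ that returns one such sink, breaking ties by a fixed order.

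\textbf{Step 2: the operator.} Next I would show that the $C_a$ behave like path-independent choice functions. Concretely, if $s$ is an $S$-sink and $s\in S'\subseteq S$, then $s$ is an $S'$-sink. Then I would run a Fleiner/Tarski-type monotone operator on the lattice of subsets of $V(G)$. Each arc $a$ repeatedly ``rejects'' every path in $Q_a$ that is not its current choice. A path is removed once it has been rejected at some arc through an arc of $D$ pointing from it to the chosen path. The process is iterated until nothing changes.

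Each path can be removed at most once, so there are polynomially many rounds, each computable in polynomial time. The arc-consecutiveness of $P_v$ along $T$ is used to make the rejections coherent: a path that survives all its arcs is chosen at each of them.

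\textbf{Step 3: correctness.} At the fixed point, I would take $K$ to be the set of paths chosen at every arc they use. Stability of $K$ follows because each $Q_a$ contributes at most one chosen element. Absorption follows from how a path gets removed, namely by an arc of $D$ pointing to a surviving sink. The catch is that the sink it points to might later itself be removed.

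\textbf{Main obstacle.} That catch is the hard part. It is exactly where general clique-acyclic super-orientations of perfect graphs fail, and why $T$-structure is needed. My first attempt would be to root $T$ and process arcs bottom-up: each path is ``owned'' by its highest arc, and each arc's choice is finalized only after all arcs below it are finalized. Together with the sink property, this should make rejections non-cascading, so that a path removed because of a sink $k$ at arc $a$ is still absorbed by $K$ at the end.

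If this ordering argument breaks, I would instead look for an explicit monotone map on a product of chains indexed by $\arcs(T)$, as in Fleiner's lattice-theoretic proof of Gale--Shapley. There I would use that at each tree node, the paths through it pair an incoming arc with an outgoing arc, which gives the two-sided (bipartite) structure needed for monotonicity.
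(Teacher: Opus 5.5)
The paper does not prove this statement; it imports it from \cite{PIM20} as a black box. Judged on its own, your plan has a genuine gap exactly where you flag it. Steps~2--3 establish termination but not absorption, and the rejection process of Step~2 actually fails under either natural reading, rather than merely lacking a proof.

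Take $T$ to be a star with center $c$ and arcs $(m_1,c),(m_2,c),(c,w_1),(c,w_2)$, with the four paths $P_{ij}=(m_i,c,w_j)$. Let $D$ be the directed $4$-cycle $P_{12}\to P_{11}\to P_{21}\to P_{22}\to P_{12}$, which is the marriage instance $m_1\colon w_1\succ w_2$, $m_2\colon w_2\succ w_1$, $w_1\colon m_2\succ m_1$, $w_2\colon m_1\succ m_2$. Every maximal clique is an edge, so $D$ is clique-acyclic, and its kernels are $\{P_{11},P_{22}\}$ and $\{P_{12},P_{21}\}$.
\begin{itemize}
\item If the arcs choose simultaneously, all four paths are rejected in the first round. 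For example, $P_{11}$ is rejected at $(c,w_1)$ in favor of $P_{21}$, which is itself rejected at $(m_2,c)$. This leaves $K=\emptyset$.
\item The bottom-up repair fails too. Root $T$ at $m_1$ and finalize $(m_2,c),(c,w_1),(c,w_2)$ in this order before $(m_1,c)$. This leaves $K=\{P_{11}\}$, which absorbs neither $P_{21}$ nor $P_{22}$.
\end{itemize}
The arcs at a single node already form a two-sided market, and fixing one agent's choice at a time does not solve such a market in general. It needs deferred acceptance, with one side proposing and the other holding offers tentatively.

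The real difficulty, which your plan leaves open, is gluing these local markets together along paths that cross many nodes. A proper $2$-coloring of the nodes of $T$ does split the arcs into two sides, with every path alternating between them. However, a path of length at least three then has several arcs on the same side. A side's acceptance rule (keep a path only if every one of its arcs on that side chooses it) is then in general not path-independent. So Fleiner's fixed-point theorem does not apply directly, and you never define the monotone map you would use instead.

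Two further remarks. First, your tie-broken sink selection $C_a$ is not path-independent either. Take a clique $\{x,y,z\}\subseteq Q_a$ with $x\leftrightarrow y$, $y\leftrightarrow z$ and only $x\to z$; this is clique-acyclic. Breaking ties alphabetically gives $C_a(\{x,y,z\})=y$ but $C_a(\{x,y\})=x$. The fix is to choose, at each arc $a$, the maximum of a fixed linear extension of the one-way arcs of $D[Q_a]$; that maximum is a sink of every subset.

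Second, once you make this fix, Step~1 is essentially the paper's reduction run backwards. The arcs of $T$, with the paths as hyperedges (plus singletons ranked last), form a network hypergraph. Every kernel of its conflict preference graph, restricted to the paths, is a kernel of $D$. So Steps~2--3 would have to be a self-contained polynomial-time algorithm for Theorem~\ref{thm:network-hypergraph-stable-matching-poly-time} itself, and that algorithm is precisely the missing idea.
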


%% file: network-hypergraphs.tex
\section{Stable Matching in Network Hypergraphs}\label{sec:network-hypergraphs}
We prove the main theorem in this section. We begin with the definition of network hypergraphs. 
Let $T = (U, A_0)$ be an oriented tree and let $(s,t)$ be an ordered pair of vertices of $T$. There exists a unique undirected path between $s$ and $t$ in $T$. We view this path as a directed traversal from $s$ to $t$ and call an arc $a_0\in A_0$ to be forward on this traversal if $a_0$ is traversed from its tail to its head and backward otherwise. 

\begin{definition}[Network Matrix]\label{def:network-matrix}
Let ${D}=(U,{A})$ be a directed graph and ${T}=(U,{A}_0)$ be a directed tree on vertex set $U$. The network matrix corresponding to $({D}, {T})$ is the matrix $M\in\{0,\pm 1\}^{{A}_0\times {A}}$ where for every $a=(u,v)\in{A}$ and $a_0\in {A}_0$, we have 
\begin{equation}\label{eq:Network-Matrix}
    M_{a_0,a}=\left\{
    \begin{array}{cc}
        1 & \textrm{if $a_0$ is a forward arc on the unique ${T}$-path from $u$ to $v$;}\\
        -1 & \textrm{if $a_0$ is a backward arc on the unique ${T}$-path from $u$ to $v$;} \\
        0 & \textrm{if $a_0$ does not belong to the unique ${T}$-path from $u$ to $v$.}
    \end{array}
    \right.
\end{equation}

A matrix $M$ is a \emph{network matrix} if there exists a directed graph ${D}=(U,{A})$ and a directed tree ${T}=(U,{A}_0)$ such that the network matrix corresponding to $({D}, {T})$ is $M$. We say ${T}$ is the \emph{principal tree} corresponding to $M$.
\end{definition}

\begin{definition}[Network Hypergraph]\label{def:Network_HG}
    A hypergraph $H=(V,E)$ is a \emph{network hypergraph} if the node-hyperedge incidence matrix of $H$ is a network matrix. If the node-hyperedge incidence matrix of $H$ is the network matrix corresponding to $(D_H,T_H)$, then we call $T_H$ as the \emph{principal tree} of $H$ and $(D_H,T_H)$ as the \emph{underlying network} of $H$. 
\end{definition}

The hypergraph in Figure \ref{fig:preference-system} is a network hypergraph---see Figure \ref{fig:principal-tree} for a principal tree of this hypergraph. 
\begin{figure}
    \centering
\includegraphics[width=0.3\linewidth]{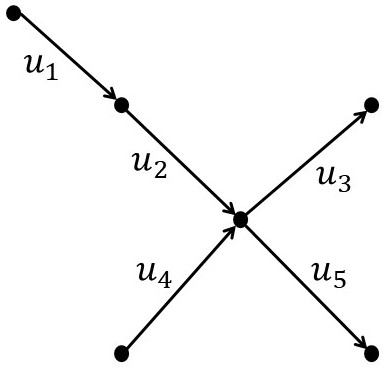}
\caption{A principal tree of the hypergraph in Figure \ref{fig:preference-system}. Note that every hyperedge is a directed path in this tree.}
\label{fig:principal-tree}
\end{figure}
For a network hypergraph $H$, the underlying network $(D_H, T_H)$ is not necessarily unique. Nevertheless, there exists a polynomial-time algorithm to verify whether a given hypergraph is a network hypergraph and, if so, then compute an underlying network of $H$ (e.g., see Schrijver's book \cite{schrijver-book-1}). 

Bipartite stable matching is a special case of stable matching on hypergraphic preference systems, where the hypergraph is a network hypergraph: orient the arcs representing one side of the bipartition toward the center of a star and those representing the other side away from it, so that every acceptable pair becomes a directed
two-arc path.

The next proposition summarizes an alternative viewpoint of a network hypergraph, namely that it is the path-arc intersection graph in the principal tree.

\begin{proposition}\label{prop:network-hypergraph-to-directed-paths-in-the-tree}
    Let $H=(V, E)$ be a network hypergraph and let $T_H$ be a principal tree of $H$. The arc set of $T_H$ is $V$. Moreover, for every hyperedge $e\in E$, there exists a $T_H$-directed path $P_e$ whose arc set is exactly $e$. 
\end{proposition}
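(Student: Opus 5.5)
The plan is to unwind Definition~\ref{def:network-matrix} and Definition~\ref{def:Network_HG} directly. Let $M$ be the node-hyperedge incidence matrix of $H$, with rows indexed by $V$ and columns by $E$, and let $(D_H,T_H)$ be an underlying network with $T_H=(U,A_0)$ and $D_H=(U,A)$. By definition, $M$ is the network matrix corresponding to $(D_H,T_H)$, whose rows are indexed by $A_0$ and columns by $A$. Identifying the row index sets gives $V=A_0$, so the arc set of $T_H$ is $V$. Identifying the column index sets gives a bijection between $E$ and $A$; write $a_e=(u_e,v_e)\in A$ for the arc corresponding to $e$. (The matrix equality fixes these identifications up to the labelling of rows and columns, which is all we need.)

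Next, fix a hyperedge $e$ and consider the unique undirected $T_H$-path $Q_e$ from $u_e$ to $v_e$. The column of $M$ indexed by $e$ is the incidence vector of $e$, so it lies in $\{0,1\}^V$. Comparing with \eqref{eq:Network-Matrix}:
\begin{itemize}
\item no arc $a_0$ of $Q_e$ can be a backward arc, since that would force the entry $M_{a_0,e}=-1$;
\item every arc of $Q_e$ is therefore forward, so $Q_e$, read from $u_e$ to $v_e$, is a $T_H$-directed path $P_e$;
\item the arcs with entry $1$ in this column are exactly the arcs of $P_e$, and these are exactly the vertices in $e$.
\end{itemize}
Hence the arc set of $P_e$ is exactly $e$.

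One degenerate case needs care: $u_e=v_e$, which gives the empty path and hence an empty hyperedge. Hyperedges are presumably nonempty, and in particular every singleton $e_i$ is nonempty, so this case does not arise. If $u_e=v_e$ were allowed, the column would be zero and $e=\emptyset$, which contradicts nonemptiness. I would add a remark to this effect.

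The main point requiring care is the sign argument: the $\{0,1\}$-valuedness of the incidence matrix excludes backward arcs, and this is what makes every path directed. Everything else is bookkeeping about index sets.
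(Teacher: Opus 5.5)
Your proposal is correct and matches the paper, which gives no separate proof and presents the proposition as a direct restatement of the definitions of network matrix and network hypergraph; your unwinding makes this explicit, with the key step that the $\{0,1\}$-valued incidence columns exclude backward arcs, so every tree path $u_e \to v_e$ is directed. Your remark on the degenerate case $u_e=v_e$ is harmless and not actually needed, since a one-node path has empty arc set.
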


A useful consequence of Proposition \ref{prop:network-hypergraph-to-directed-paths-in-the-tree} is that the conflict graph $G_H$ associated with a network hypergraph $H$ is a DE graph. 

\begin{corollary}\label{coro:network-hypergraphs-DE-graphs}
    Let $H$ be a network hypergraph. Then, $G_H$ is a DE graph.
\end{corollary}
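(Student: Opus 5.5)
We would prove Corollary~\ref{coro:network-hypergraphs-DE-graphs} by exhibiting the representation that the definition of a DE graph asks for, namely an oriented tree together with one directed path per vertex of $G_H$, and we would take it directly from Proposition~\ref{prop:network-hypergraph-to-directed-paths-in-the-tree}.

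First, fix a principal tree $T_H$ of $H$. By Proposition~\ref{prop:network-hypergraph-to-directed-paths-in-the-tree}, its arc set is exactly $V$. Second, the vertex set of $G_H$ is $E$, so to each $e\in E$ we assign the $T_H$-directed path $P_e$ from that proposition, whose arc set is exactly $e$. Third, for distinct $e,f\in E$ we check the adjacency condition:
\[
e \text{ and } f \text{ adjacent in } G_H \iff e\cap f\neq\emptyset \iff \text{the arc sets of } P_e \text{ and } P_f \text{ intersect}.
\]
The first equivalence is the definition of the conflict graph. The second holds because the arc sets of $P_e$ and $P_f$ are precisely $e$ and $f$ as subsets of $V=A(T_H)$. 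Hence $G_H$ is the path-arc intersection graph of $\{P_e\}_{e\in E}$ in the oriented tree $T_H$, which is exactly a DE graph.

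Two details need care. The singleton hyperedges $e_v=\{v\}$ correspond to one-arc directed paths, which are legitimate $T_H$-directed paths, so they cause no trouble. Distinct hyperedges may coincide as vertex sets only if $H$ has parallel hyperedges. In that case several vertices of $G_H$ receive the same path, and we need the DE-graph definition to allow one path per vertex even when paths repeat. The definition as stated assigns a path to each vertex without requiring the paths to be distinct, so this is fine. Identical paths share all their arcs, so such vertices are correctly adjacent.

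The only real content is Proposition~\ref{prop:network-hypergraph-to-directed-paths-in-the-tree}. If one had to prove it, the main obstacle would be showing that each column of a $\{0,1\}$ network matrix is a directed path. A $\{0,1\}$ column has no $-1$ entries, so the unique $T_H$-path from $u$ to $v$ for $a=(u,v)\in A$ has no backward arcs. A path with no backward arcs is a directed path from $u$ to $v$, and its arc set is the support of the column, which is $e$. Since the proposition is available, the corollary follows immediately.
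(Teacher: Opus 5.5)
Your proposal is correct and matches the paper's proof: both take the principal tree $T_H$, assign each hyperedge $e$ the directed path $P_e$ from Proposition~\ref{prop:network-hypergraph-to-directed-paths-in-the-tree}, and observe that $e\cap f\neq\emptyset$ exactly when $P_e$ and $P_f$ share an arc. Your extra remarks (singleton hyperedges as one-arc paths, repeated paths, and the sketch that a $\{0,1\}$ column has no backward arcs) are sound but go beyond what the paper's short argument needs.
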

\begin{proof}
    Let $T_H$ be a principal tree of $H$. 
    By Proposition \ref{prop:network-hypergraph-to-directed-paths-in-the-tree}, each $e\in E$ is represented by a $T_H$-directed path $P_e$. We also have that $e\cap f\neq \emptyset$ if and only if $P_e$ and $P_f$ share a tree arc. Thus, the conflict graph $G_H$ is a DE graph. 
\end{proof}

For the rest of this section, consider a hypergraphic preference system $(H, \succ)$, where $H$ is a network hypergraph. We will show that the conflict preference graph $D_{H, \succ}$ is indeed a clique-acyclic super-orientation of the conflict graph $G_H$. We begin with the observation that the conflict preference graph is a super-orientation of the conflict graph. This was observed by Cs\'{a}ji already. We include a proof for completeness.

\begin{proposition}\label{prop:D-is-superorientation}\cite{Csa-egres}
    Let $(H, \succ)$ be a hypergraphic preference system. Then, $D_{H, \succ}$ is a super-orientation of the conflict graph $G_H$. 
\end{proposition}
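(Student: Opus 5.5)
We have to show two things: $D_{H,\succ}$ and $G_H$ have the same vertex set $E$, and every edge of $G_H$ is replaced in $D_{H,\succ}$ by at least one of its two orientations, with no arcs appearing between non-adjacent pairs. Both digraphs are defined on the node set $E$, so the vertex sets agree by construction. The argument is a direct unpacking of the definitions of $G_H$ and $D_{H,\succ}$, in two inclusions.

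\emph{No extra arcs.} Take an arc $(e,f)\in\arcs(D_{H,\succ})$. By definition there is a vertex $v\in e\cap f$ with $f\succ_v e$. Strictness of $\succ_v$ forces $e\neq f$, so there are no loops. Also $e\cap f\neq\emptyset$, so $\{e,f\}$ is an edge of $G_H$. Hence the underlying undirected graph of $D_{H,\succ}$ is contained in $G_H$.

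\emph{Every edge is oriented.} Take an edge $\{e,f\}$ of $G_H$, so $e\neq f$ and $e\cap f\neq\emptyset$. Pick any $v\in e\cap f$. Then $e,f\in\delta(v)$. Since $\succ_v$ is a strict total order on $\delta(v)$ and $e\neq f$, either $f\succ_v e$ or $e\succ_v f$. In the first case $(e,f)\in\arcs(D_{H,\succ})$, and in the second $(f,e)\in\arcs(D_{H,\succ})$. So at least one orientation is present, and both may occur when different common vertices disagree, which the definition of super-orientation allows.

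Combining the two inclusions, $D_{H,\succ}$ arises from $G_H$ by replacing each edge with one or both of its arcs, which is exactly a super-orientation. No step is a real obstacle. The one point that needs care is using \emph{totality} of $\succ_v$ on $\delta(v)$ (not just strictness) to guarantee an arc for every conflicting pair, together with the fact that distinct hyperedges are distinct nodes, so the conflict graph is simple and loopless.
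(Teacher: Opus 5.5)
Your proof is correct and takes the same route as the paper: pick a common vertex $v\in e\cap f$ and use that $\succ_v$ is a strict total order on $\delta(v)$ to obtain at least one of the arcs $(e,f)$ or $(f,e)$. You also check that every arc of $D_{H,\succ}$ joins two distinct intersecting hyperedges, so no arcs appear outside $G_H$; the paper leaves this implicit, and including it makes the argument complete.
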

\begin{proof}
    Let $e \in E(G_H)$. By definition of $E(G_H)$, there exists a vertex $v \in f \cap g$. Since $\succ_{v}$ is a total ordering of the hyperedges in $\delta(v)$, either $f \succ_{v} g$ or $g \succ_{v} f$. Thus either $(f,g) \in \text{Arcs}(D_{H,\succ})$ or $(g,f) \in \text{Arcs}(D_{H,\succ})$.
\end{proof}

Next, we show that the conflict preference graph is a clique-acyclic super-orientation of the conflict graph. 
For this, we need the following directed-arc Helly property of oriented trees which was shown by Monma and Wei. 
We include a proof for completeness. Their proof was by induction on the number of paths while our proof is via slightly different arguments. 

\begin{lemma}\cite{MW86}\label{lem:directed-arc-helly}
    Let $T$ be an oriented tree and let $\Pi$ be a finite family of non-empty $T$-directed paths. If every two paths in $\Pi$ share an arc, then there exists an arc shared by all paths in $\Pi$. 
\end{lemma}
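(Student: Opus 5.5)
Our plan is to reduce the statement to the classical Helly property for subtrees of a tree, applied to vertex sets, and then to use the orientation of the arcs to pass from a common vertex to a common arc. Think of each $T$-directed path $P\in\Pi$ as the subtree of the underlying undirected tree spanned by its vertex set $V(P)$. If two paths share an arc, they share its two endpoints, so the vertex sets $V(P)$ pairwise intersect. Subtrees of a tree have the Helly property. Hence there is a vertex $w$ of $T$ that lies on every path in $\Pi$. (If one prefers to avoid citing this, the standard argument proves it: for three subtrees, pick pairwise common vertices and take the median of the three in the tree; then induct on $|\Pi|$.)

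Next, we analyse the local structure at $w$. A directed path through $w$ either starts at $w$, ends at $w$, or passes through it. In the last case it uses exactly one arc entering $w$ and exactly one arc leaving $w$. In the first case it uses one out-arc at $w$ and no in-arc; in the second it uses one in-arc at $w$ and no out-arc. For $P\in\Pi$, let $\mathrm{in}(P)$ denote its in-arc at $w$ (if any) and $\mathrm{out}(P)$ its out-arc at $w$ (if any). Because $T$ is a tree, two paths through $w$ that share an arc must share an arc incident to $w$. Indeed, the union of the two paths' vertex sets, together with $w$, forms a connected subtree, and the shared arc lies on both unique tree-paths from $w$ to that arc. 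So the first arcs leaving $w$ toward the shared arc coincide, and that first arc is an arc of both paths. Thus for every pair $P,P'\in\Pi$ we have either $\mathrm{in}(P)=\mathrm{in}(P')\neq\emptyset$ or $\mathrm{out}(P)=\mathrm{out}(P')\neq\emptyset$.

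The main obstacle is concluding that a single arc at $w$ is common to all paths. We will choose $w$ carefully to handle it. Suppose no single in-arc and no single out-arc is used by all paths. Then there are paths $P_1,P_2$ with $\mathrm{in}(P_1)\ne\mathrm{in}(P_2)$ (or one of them is missing), so they must share their out-arc $a$. Likewise there are paths $P_3,P_4$ that fail to share an out-arc, so they share an in-arc $b$. We then compare $P_1,P_2$ against $P_3,P_4$ using the pairwise condition, and check that the pairwise constraints force a contradiction. The pairwise relation "same in-arc or same out-arc" defines a graph on $\Pi$ covered by two equivalence-type relations, and a short case check shows that one class must be everything. If this combinatorial check fails because of a pattern like $(a_1,b_1),(a_1,b_2),(a_2,b_1)$, we will instead move $w$. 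In that case all paths would pass through both $a_1$ and $b_1$ except for a structured exception, and we shift $w$ along the common arc to the neighbouring vertex and repeat. Progress is measured by choosing $w$ in the Helly intersection to maximize, for instance, the number of paths starting or ending at $w$. An alternative, if this becomes messy, is to fall back on Monma and Wei's induction on $|\Pi|$. In that approach one removes a path and obtains a common arc $a$ for the rest, and the common arcs of each subfamily are shown to form a directed subpath. Intersecting these directed subpaths then reduces the problem to Helly for intervals on a line.
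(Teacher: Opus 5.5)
Your first two steps are the same as the paper's proof. You get a common vertex $w$ from the Helly property of subtrees. You then show that two paths through $w$ sharing an arc also share an arc incident with $w$. For that step, what matters is that each path is itself a subtree containing $w$ and both ends of the shared arc, so it contains the tree path from $w$ to the far end; the union plays no role.

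The gap is the final step, which is where the orientation of $T$ actually does the work. You leave it as an unresolved ``main obstacle'': you assert that a short case check works, then hedge, describe a possible failure pattern, and offer fallbacks that are not carried out. Relocating $w$ comes with no argument that it makes progress. The Monma--Wei sketch never explains why the common-arc subpaths of the subfamilies lie on a single line. That is exactly where the undirected counterexample (a star with its three leaf-to-leaf paths) breaks, so directedness would have to be used there, and you do not say how.

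No fallback is needed; your own setup closes in one line. You have $P_1,P_2$ that do not share an in-arc, so they share an out-arc $a$. Since $a$ is not used by all paths, pick $P_3$ with $\mathrm{out}(P_3)\neq a$ (possibly missing). Then $P_3$ shares no out-arc with $P_1$ or with $P_2$. Your pairwise condition therefore gives $\mathrm{in}(P_1)=\mathrm{in}(P_3)=\mathrm{in}(P_2)\neq\emptyset$, contradicting the choice of $P_1,P_2$.

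The paper phrases the same argument differently. The sets $I_P$ of arcs of $P$ incident with $w$ pairwise intersect, and each contains at most one in-arc and at most one out-arc. A singleton $I_{P_0}=\{a\}$ must lie in every $I_P$. Otherwise the $I_P$ are pairwise-intersecting edges of a bipartite graph with sides $\delta^{\mathrm{in}}(w)$ and $\delta^{\mathrm{out}}(w)$, and such edges must form a star because a bipartite graph has no triangle.

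In particular, the pattern $(a_1,b_1),(a_1,b_2),(a_2,b_1)$ you worry about can never arise. The pair $(a_1,b_2)$ and $(a_2,b_1)$ shares neither an in-arc nor an out-arc, which violates the pairwise condition you had already derived.
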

\begin{proof}
    If $|\Pi|=1$, then we are done. So, assume $|\Pi|\ge 2$. 
    
    We first show that there exists a node $x\in \cap_{P\in \Pi} V(P)$. Let $T'$ be the undirected tree obtained from $T$ by ingoring its arc orientations. For each $P\in \Pi$, the vertex set $V(P)$ induces a subtree of $T'$. The hypothesis that every two members of $\Pi$ share an arc implies, in particular, that the subtree $T'[V(P)]$ are pairwise vertex-intersecting. By the Helly property for subtrees of a tree, it follows that there exists a node $x\in \cap_{P\in \Pi} V(P)$. 

    Fix a node $x\in \cap_{P\in \Pi}V(P)$. For $P\in \Pi$, let $I_P:=\arcs(P)\cap \delta_T(x)$ be the set of arcs of $P$ incident with $x$. Since $P$ is non-empty and contains $x$, the set $I_P$ is non-empty. Moreover, because $P$ is a directed path, $I_P$ contains at most one arc entering $x$ and at most one arc leaving $x$. Thus, $|I_P|\le 2$ and whenever $|I_P|=2$, one member of $I_P$ lies in $\delta_T^{in}(x)$ and the other lies in $\delta_T^{out}(x)$. 

    We next show that the collection $\{I_P: P\in \Pi\}$ is pairwise intersecting. Let $P, Q\in \Pi$ and let $a$ be an arc shared by $P$ and $Q$. The intersection of two paths in a tree is connected. Consequently, the intersection of the underlying paths of $P$ and $Q$ contains the unique path between $x$ and either end-vertex of $a$. Choosing an end-vertex different from $x$ if $a$ is incident with $x$, the first arc on this path is incident with $x$ and belongs to both $P$ and $Q$. Hence, $I_P\cap I_Q\neq \emptyset$. 

    We now show that the whole family $\{I_P: P\in \Pi\}$ has a common member which would complete the proof. If $I_{P_0}=\{a\}$ for some $P_0$, then pairwise intersection forces $a\in I_P$ for every $P\in \Pi$ and we are done. We may therefore assume that every $I_P$ has size two. In particular, we may regard each $I_P$ as an edge of the bipartite graph with bipartition $(\delta_T^{in}(x), \delta_T^{out}(x))$. Fix an arbitrary $P_0\in \Pi$ and let $I_{P_0}=\{a, b\}$ with $a\in \delta_T^{in}(x)$ and $b\in \delta_T^{out}(x)$. For every $P\in \Pi$, we have that $I_P$ contains $a$ or $b$. Suppose there exists $I_P$ that misses $a$ and some $I_Q$ that misses $b$, then pairwise intersection with $I_{P_0}$ gives $I_P=\{a', b\}$ and $I_Q=\{a, b'\}$ for some $a'\neq a$ and some $b'\neq b$ and moreover, $a, a'\in \delta_T^{in}(x)$ and $b, b'\in \delta_T^{out}(x)$. This implies that $I_P\cap I_Q=\emptyset$, a contradiction to the pairwise intersecting property. Therefore, either every $I_P$ contains $a$ or every $I_P$ contains $b$. That common arc is an arc of $T$ belonging to every path in $\Pi$. 
\end{proof}

We emphasize that the undirected version of Lemma \ref{lem:directed-arc-helly} is false: suppose $T$ is an undirected tree and $\Pi$ is a finite family of non-empty $T$-paths (undirected paths in $T$). If every two paths in $\Pi$ share an edge, then there may not be an edge shared by all paths in $\Pi$. Here is a simple example: consider the tree $T$ to be the star graph with center vertex $v$ and $3$ leaf vertices denoted as $\{a, b, c\}$ and the collection $\Pi$ of paths connecting the three pairs of leaves. These three paths are pairwise intersecting but they have no common edge. 

We now use Lemma \ref{lem:directed-arc-helly} to conclude that the conflict preference graph $D_{H, \succ}$ is a clique-acyclic super-orientation of the conflict graph $G_H$.

\begin{lemma}\label{lem:network-hypergraph-to-clique-acyclic-super-orientation}
    Let $(H, \succ)$ be a hypergraphic preference system, where $H$ is a network hypergraph. Then, 
    $D_{H, \succ}$ is a clique-acyclic super-orientation of the conflict graph $G_H$. 
\end{lemma}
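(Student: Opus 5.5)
By Proposition \ref{prop:D-is-superorientation}, $D_{H, \succ}$ is already a super-orientation of $G_H$, so it remains to show clique-acyclicity. I will prove this directly from the definition: any unidirectional cycle inside a clique of $G_H$ leads to a contradiction. (Alternatively, one could exhibit a $V(Q)$-sink for every clique $Q$ and invoke Proposition \ref{prop:clique-acyclic-characterization}. The direct route is shorter.)

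Let $Q$ be a clique of $G_H$, so its vertices are hyperedges that pairwise intersect. Suppose, for contradiction, that $D_{H,\succ}$ has a directed cycle $g_1\to g_2\to\dots\to g_k\to g_1$ on hyperedges of $Q$ with $k\ge 2$. Unidirectional means that for each $i$ the arc $(g_i,g_{i+1})$ is present but the reverse arc $(g_{i+1},g_i)$ is absent (indices mod $k$). Since $k\geq 2$ and each arc is one-directional, consecutive $g_i$ are distinct.

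By Proposition \ref{prop:network-hypergraph-to-directed-paths-in-the-tree}, each $g_i$ is the arc set of a $T_H$-directed path $P_{g_i}$. The paths are non-empty because every hyperedge contains at least one vertex. Since the hyperedges of $Q$ pairwise intersect, the paths $\{P_{g_i}\}$ pairwise share an arc of $T_H$. Lemma \ref{lem:directed-arc-helly} therefore gives a tree arc, that is, a vertex $v\in V$, with $v\in g_1\cap\dots\cap g_k$. This Helly step is the one place where the network structure is essential, and it is exactly what the star example after Lemma \ref{lem:directed-arc-helly} shows can fail for undirected paths.

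Now fix this common vertex $v$ and use its strict order $\succ_v$ on $\delta(v)$, which contains every $g_i$. For each $i$, suppose $g_i\succ_v g_{i+1}$. Since $v\in g_i\cap g_{i+1}$, the definition of $D_{H,\succ}$ would then give the arc $(g_{i+1},g_i)$, which contradicts unidirectionality. Hence $g_{i+1}\succ_v g_i$ for every $i$. Chaining these around the cycle gives $g_1\prec_v g_2\prec_v\dots\prec_v g_k\prec_v g_1$, which contradicts $\succ_v$ being a strict total order.

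I expect no real obstacle here, since the main work is the Helly step, which is already supplied by Lemma \ref{lem:directed-arc-helly}. The points to check carefully are the meaning of ``unidirectional'' (that the reverse arcs are absent), that the paths are non-empty, and the degenerate case $k=2$. For $k=2$ the cycle consists of the arcs $(g_1,g_2)$ and $(g_2,g_1)$, so reverse arcs are present and the cycle is not unidirectional. Such cycles are therefore excluded by definition, and the argument above covers every genuine case $k\ge 3$.
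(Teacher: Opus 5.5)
Your proof is correct and follows the paper's argument in substance. Both use Proposition \ref{prop:D-is-superorientation} for the super-orientation part and Lemma \ref{lem:directed-arc-helly} to find a common arc $v$, then exploit the strict order $\succ_v$. The only difference is minor: the paper exhibits the $\succ_v$-maximum member of $Q$ as a $V(Q)$-sink and invokes Proposition \ref{prop:clique-acyclic-characterization}, whereas you show directly that a one-way cycle would force $g_1 \prec_v g_2 \prec_v \cdots \prec_v g_1$, which avoids that proposition.
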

\begin{proof}
    By Proposition \ref{prop:D-is-superorientation}, we know that $D_{H, \succ}$ is a super-orientation of the conflict graph $G_H$. It remains to show clique-acyclicity. Let $Q$ be an arbitrary clique of $G_H$. By Proposition \ref{prop:clique-acyclic-characterization}, it suffices to show that there exists a $V(Q)$-sink in $D_{H, \succ}$. We note that the vertices of $G_H$ correspond to hyperedges of $H$ and hence, the vertices of the clique $Q$ correspond to a subset of hyperedges that are pairwise-intersecting. 

    Let $H=(V, E)$ and let $T_H$ be a principal tree of $H$. We recall that the arc set of $T_H$ is $V$.  
    By Proposition \ref{prop:network-hypergraph-to-directed-paths-in-the-tree}, each $e\in E$ is represented by a $T_H$-directed path $P_e$. In particular, the collection $\Pi:=\{P_e: e\in Q\}$ of $T$-directed paths are pairwise arc-intersecting (since the vertices of $Q$ correspond to a subset of pairwise intersecting hyperedges). By Lemma \ref{lem:directed-arc-helly}, there exists a common arc $v$ in $T_H$ shared by all paths in $\Pi$. Let $f$ be a $\succ_v$-maximum member of $Q$, i.e., $f\in V(Q)$ and $f\succ_v e$ for all $e\in V(Q)\setminus \{f\}$. It follows that for every $e\in V(Q)\setminus \{f\}$, the arc $(e, f)$ is present in $D_{H, \succ}$ by definition of $D_{H, \succ}$ and consequently, $f$ is the needed $V(Q)$-sink in $D_{H, \succ}$. 
\end{proof}

We now complete the proof of Theorem \ref{thm:network-hypergraph-stable-matching-poly-time} using Lemmas \ref{lem:kernels-and-stable-matchings} and \ref{lem:network-hypergraph-to-clique-acyclic-super-orientation}, Corollary \ref{coro:network-hypergraphs-DE-graphs}, and Theorem \ref{thm:kernel-of-de-graphs}.
\begin{proof}[Proof of Theorem \ref{thm:network-hypergraph-stable-matching-poly-time}]
    We first describe the algorithm. Let $(H=(V, E), \succ)$ be a given hypergraphic preference system where $H$ is a network hypergraph. Construct the conflict graph $G_H$ of $H$ and the digraph $D_{H, \succ}$. Compute a kernel $K$ of $D_{H, \succ}$ and return it. 
    
    The correctness of the algorithm is by Lemma \ref{lem:kernels-and-stable-matchings}. We now show that the algorithm can be implemented to run in polynomial time. Both $G_H$ and $D_{H, \succ}$ can be constructed in polynomial time. 
    By Corollary \ref{coro:network-hypergraphs-DE-graphs}, the conflict graph $G_H$ is a DE graph. 
    By Lemma \ref{lem:network-hypergraph-to-clique-acyclic-super-orientation}, the digraph $D_{H, \succ}$ is a clique-acyclic super-orientation of $G_H$. Consequently, by Theorem \ref{thm:kernel-of-de-graphs}, a kernel of $D_{H, \succ}$ can be computed in polynomial time. Thus, the algorithm can be implemented to run in polynomial time. 
\end{proof}

%% file: conclusion.tex
\section{Future Directions}
Our result advances the algorithmic frontier for stable matching in hypergraphic preference systems from arborescence hypergraphs to network hypergraphs (see Figure \ref{fig:containment}). In contrast to the subtree hypergraphs route towards normal hypergraphs, network hypergraphs include bipartite stable matching as a special case. 
Two natural questions mark the boundary of our result. It is unknown whether Scarf's pivoting procedure itself converges in polynomial time on network hypergraphs, and whether stable matching is polynomial-time solvable on all unimodular hypergraphs. Beyond them lies the central problem regarding normal hypergraphs. 

Stable hypergraph matching for arbitrary capacities is a useful model in certain applications \cite{BCS25}. 
In stable hypergraph $b$-matching, a vertex $v$ may belong to up to $b(v)$ selected hyperedges. Scarf's lemma extends to this setting, and total unimodularity makes the relevant capacitated matching polytope integral; thus, a stable $b$-matching exists on every unimodular hypergraph \cite{BF16}. The algorithmic picture is incomplete. Polynomial-time solvability is known for laminar hypergraphs, and an XP algorithm parameterized by maximum hyperedge size is known for sub-path hypergraphs, while existence is NP-hard for subtree hypergraphs under strong restrictions \cite{BCS25}. Our approaches do not extend to stable $b$-matching since our reduction to digraph kernel fundamentally uses capacity one: kernels encode stable sets, whereas a $b$-matching may contain intersecting hyperedges. Our work, therefore, raises polynomial-time stable $b$-matching for network hypergraphs as another natural direction for future research.